\newtheorem{theorem}{Theorem}[section]
\newtheorem{proposition}[theorem]{Proposition}
\theoremstyle{remark} \newtheorem{remark}[theorem]{Remark}
\begin{document}

\title[]{Quillen-type bundle and geometric prequantization on moduli space of the Seiberg-Witten equations on product of Riemann surfaces}

\author{Rukmini Dey}

\vskip 2mm

\address{I.C.T.S.-T.I.F.R., Bangalore\\
Email: rukmini@icts.res.in}

\maketitle 

\begin{abstract}
We show  the existence of a symplectic structure on the moduli space of the Seiberg-Witten equations on $\Sigma \times \Sigma$ where $\Sigma$ is a compact oriented Riemann surface.  To prequantize the moduli space,  we construct a Quillen-type determinant line bundle on it and show its curvature is proportional to the symplectic form. 
\end{abstract}

\section{Introduction}

Let $\mathcal{M}$ be a symplectic manifold with an integral symplectic form $\Omega$.  Geometric prequantization is a construction of a line 
bundle with a Hermitian metric and a connection such the curvature is proportional to the symplectic form $\Omega$.  Moduli spaces of solutions of equations in gauge theory often have symplectic structure and we have the Quillen-type determinant  bundles as prequantum line bundles.  Examples include the vortex moduli space ~\cite{Dv1}, ~\cite{Dv2},  ~\cite{ER},  the Hitchin moduli space ~\cite{Dh1}, ~\cite{Dh2},  ~\cite{M}, dimensionally reduced generalized Seiberg-Witten moduli space ~\cite{DT} etc.  In each of these cases  the  equations are defined on fields on a Riemann surface and considering  Quillen-type  bundles as  prequantum bundles is natural.  The moduli space of the anti-self-duality equations also has a 
similar treatment ~\cite{DK},  even though the fields are defined on a complex $4$-manifold.  An analogous  technique is  used in ~\cite{G} for quantizing vortex moduli space for a complex K$\ddot{\rm{a}}$hler $4$-manifold. 

In this paper we focus on the moduli space of the Seiberg-Witten equations on the $4$-manifold $\Sigma \times \Sigma$. 
We show that the moduli space carries symplectic structure using a moment map construction.  Since we are on a $4$-manifold,  the classical Quillen bundle construction does not go through as the latter  is  only for a Riemann surface.  However,  one can apply techniques mentioned in  ~\cite{DK} and ~\cite{G}  of restricting the configuration space variables to the Riemann surface which is the  Poincar$\acute{\rm{e}}$ dual to the K$\ddot{\rm{a}}$hler form on the $4$-manifold.  This technique works for example for the moduli space of Anti-Self-Dual connections, see ~\cite{DK} and  the vortex moduli space for a complex K$\ddot{\rm{a}}$hler $4$-manifold,  ~\cite{G}.  
 We perform an analogous construction of a Quillen-type determinant bundle. In our case,  we have to modify the Quillen metric suitably to get the curvature to match the symplectic form on the configuration space of the moduli space of Seiberg-Witten equations.

\section{ Seiberg-Witten equations on product of Riemann surfaces}

We begin by reviewing the moment map on ${\mathbb H}$ w.r.t. the usual action of $U(1)$ on ${\mathbb H}$. 

Let $h \in {\mathbb H}$.  An action of $U(1)$ on ${\mathbb H}$ is given by $h \mapsto \zeta  h$ where $ \zeta = ic \in i{\mathbb R}$,  the Lie algbera of $U(1)$.   The fundamental vector field for this action,  $L_{\zeta}$,  is given by 
$L_{\zeta} = ic (h \frac{\partial}{\partial h} -  \bar{h} \frac{\partial}{\partial \bar{h}})$.  
Define a map $\mu(h) =  \frac{1}{2} \bar{h} i   h$ taking ${\mathbb H} \mapsto {\mathbb R}^3 \otimes i {\mathbb R}$. Then  $< \delta \mu(h),  \zeta> = \frac{c}{2}  (d \bar{h} h + \bar{h} dh) $.  Let 
$\omega_{\mathbb H} = \frac{i}{2} d h \wedge d \bar{h}$ be a K$\ddot{\rm{a}}$hler form on ${\mathbb H}$.  
$\omega_{\mathbb H} (L_{\zeta}, \cdot) = -\frac{c}{2}  (h d \bar{h} + \bar{h} dh ) (\cdot) =  -  < \delta \mu(h),  \zeta> (\cdot)$. 
Thus $ \mu$ is a moment map w.r.t.  $\omega_{\mathbb H}$.  Let us write 
$\mu = (\mu_1,  \mu_2,  \mu_3)$. 

Note that  $\omega_{\mathbb H}$ is the K$\ddot{\rm{a}}$hler form on ${\mathbb H}$ with the relationship $\omega_{\mathbb H} (  \cdot , \cdot) = g^{\mathbb H} ( I\cdot ,  \cdot)$ where $g^{\mathbb H}$ is the hyperK$\ddot{\rm{a}}$hler metric and $I =i$ is the first of the three usual complex structures  on ${\mathbb H}$.

{\bf The Seiberg-Witten equations:}

Let $\Sigma$ be a compact Riemann surface.  Let $\tilde{\omega}_{\Sigma} = \frac{i}{2} f(z, \bar{z}) d z \wedge d \bar{z}$ be a  K$\ddot{\rm{a}}$hler form on it which we choose to be s.t.  $\tilde{\omega}_{\Sigma} = 2 \omega_{\Sigma}$, the choice in ~\cite{DT}. 
Let $L$ be a trivial line bundle on $X= \Sigma \times \Sigma$.  Let    $\Gamma(L, \mathbb{H}) $ be sections of $L$ which take values in ${\mathbb H}$.  Let $H$, $A$  be a hermitian metric and an unitary connection  on $L$ respectively.   Let $F(A)$ denote the  curvature of the connection $A$.  Let $\mu $ be the moment map on ${\mathbb H} $ defined by 
$\mu(h) = \frac{1}{2} \bar{h} i h $.  Let $u \in \Gamma(L, \mathbb{H}) $.   Since $L$ is trivial,  $u$ is a function from $X$ to 
${\mathbb H}$. 

Let $(x_0, x_1, x_2, x_3 )$ be local coordinates on an open set $U$ of $\Sigma \times \Sigma$. 

There is an obvious identification of the space of self dual two forms on $\Sigma \times \Sigma$ with ${\mathbb R}^3 \otimes i{\mathbb R}$.  Let us name this $\tau$.  Let us denote   $\hat{F}_A = \tau({F}^+_A) \in {\mathbb R}^3 \otimes i{\mathbb R}$,  $F_A^+$ is the self-dual part of $F(A)$. 

The  Seiberg-Witten equations are given by 

\begin{equation}\label{eqn}
\begin{array}{rcl}
\hat{F}_A - \mu \circ u &=&0\\
{\mathcal D}_A u &=& 0 
\end{array}
\end{equation}
where    ${\mathcal D}_A$ is defined locally as follows.

\begin{equation}\label{localeqn2}
{\mathcal D}_A u  = - (\frac{\partial u}{\partial x_{0}} + A_0u) dx_0  + i (\frac{\partial u}{\partial x_{1}} + A_1u) dx_1+ j (\frac{\partial u}{\partial x_{2}} + A_2u)dx_2  + k (\frac{\partial u}{\partial x_{3}} + A_3u)dx_3 = 0
\end{equation}
 where $A = i (A_0 d x_0 + A_1 dx_1 + A_2 d x_2 + A_3 d x_3)$.

The equation $ \hat{F}_A- \mu \circ u =0$ locally looks like 

\begin{equation}\label{localeqn1}
\begin{array}{rcl}
i(\frac{\partial A_1}{\partial x_0} -  \frac{\partial A_0}{\partial x_1 } + \frac{\partial A_2}{\partial x_3} - \frac{\partial A_3}{\partial x_2}) &=& \mu_1 \circ u  \\
i(\frac{\partial A_2}{\partial x_0} -  \frac{\partial A_0}{\partial x_2 } + \frac{\partial A_1}{\partial x_3} - \frac{\partial A_3}{\partial x_1} )&=& \mu_2 \circ u \\
i(\frac{\partial A_3}{\partial x_0} -  \frac{\partial A_0}{\partial x_3 } + \frac{\partial A_2}{\partial x_1} - \frac{\partial A_1}{\partial x_2}) &=& \mu_3 \circ u 
\end{array}
\end{equation}

{\bf Note:}  Note $\phi_1 = A_2 $ and $\phi_2 = A_3$ and $\phi = - (\phi_1 + i \phi_2)$ in keeping with the notation in ~\cite{DT}.

  Let $G = U(1)$.
The gauge group ${\mathcal G}$ which locally is $\text{Map}(X, U(1))$  acts on the equations (\ref{eqn}) and keep them invariant.  The action is given by the following. If $g \in {\mathcal G}$ then $A_g = g^{-1}dg + A$ and $u_g = g^{-1}u$. 

Let ${\mathcal C} = {\mathcal A} \times \Gamma(L, \mathbb{H})$.
Let ${\mathcal S} \subset {\mathcal C}$ be the solution space of the equations (\ref{eqn}).  Let 
${\mathcal M} =  \frac{\mathcal S}{\mathcal G}$ be the moduli space of solutions to the equation (\ref{eqn}).

{\bf Note:} Since $M = {\mathbb H}$ we these are just the Seiberg-Witten equations as opposed to Generalized Seiberg-Witten equations mentioned in ~\cite{DT}. 

 As in ~\cite{DT} define the  reduced equations (dimensional reduction of equations  (\ref{eqn})) on $\Sigma$ as follows.
\begin{equation}\label{localeqn3}
\begin{array}{rcl}
i (\frac{\partial a_1}{\partial x_0} -  \frac{\partial a_0}{\partial x_1 })  &=& \mu_1 \circ \tilde{u}  \\
i(\frac{\partial a_2}{\partial x_0}  - \frac{\partial a_3}{\partial x_1}) &=& \mu_2 \circ \tilde{u} \\
i( \frac{\partial a_3}{\partial x_0}  + \frac{\partial a_2}{\partial x_1} ) &=& \mu_3 \circ \tilde{u} 
\end{array}
\end{equation}

\begin{equation}\label{localeqn4}
{\mathcal D}_a \tilde{u}  = - (\frac{\partial \tilde{u}}{\partial x_{0}} + a_0\tilde{u})  + i (\frac{\partial \tilde{u}}{\partial x_{1}} + a_1 \tilde{u}) + j  a_2 \tilde{u}  + k   a_3 \tilde{u} 
\end{equation}
 where $a= i (a_0 d x_0 + a_1 dx_1 )$.  Here $a$ and $\tilde{u}$ depend on $x_0, x_1$ only.
Let us denote by ${\mathcal M}_{\Sigma} $ be the moduli space of the dimensionally reduced equations   (\ref{localeqn3}), (\ref{localeqn4}) on $\Sigma$ as given in ~\cite{DT}.

\begin{proposition}
${\mathcal M}$ is non-empty.
\end{proposition}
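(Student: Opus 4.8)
The plan is to exhibit at least one point of $\mathcal{S}$ explicitly, and the most natural source is the dimensionally reduced moduli space $\mathcal{M}_{\Sigma}$ recalled just above. First I would invoke \cite{DT} to fix a solution $(a,\tilde{u})$ of the reduced equations (\ref{localeqn3}) and (\ref{localeqn4}) on $\Sigma$, so that $\mathcal{M}_{\Sigma}\neq\emptyset$; these are the dimensionally reduced (vortex-type) Seiberg-Witten equations on a Riemann surface, for which solutions are known to exist.

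Next I would lift such a solution to $X=\Sigma\times\Sigma$ by pulling back along the first projection $\pi_1:X\to\Sigma$, i.e.\ by requiring that all fields be independent of the second pair of coordinates $(x_2,x_3)$. Concretely, set $u=\tilde{u}$, $A_0=a_0$, $A_1=a_1$, and interpret the remaining components as the Higgs fields $A_2=\phi_1=a_2$ and $A_3=\phi_2=a_3$ via the identification in the Note preceding (\ref{localeqn3}), all regarded as functions of $(x_0,x_1)$ only. I would then verify that $(A,u)$ solves the full system: since $\partial_{x_2}=\partial_{x_3}=0$ on these fields, every term carrying a $\partial_{x_2}$ or $\partial_{x_3}$ derivative in the curvature equations (\ref{localeqn1}) drops out, and the three lines collapse exactly onto (\ref{localeqn3}); likewise the Dirac equation (\ref{localeqn2}), read as a single quaternionic Clifford equation, loses its $\partial_{x_2},\partial_{x_3}$ terms and reduces to (\ref{localeqn4}). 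Hence $(A,u)\in\mathcal{S}$, and therefore $\mathcal{M}=\mathcal{S}/\mathcal{G}\neq\emptyset$.

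The step requiring the most care is checking the reduction of the curvature equation. One must match the self-dual two-form components $\hat{F}_A$ to the three right-hand sides of (\ref{localeqn3}) under the identification $\phi=-(\phi_1+i\phi_2)$, confirming that no $x_2,x_3$-derivatives survive and that the combinations $\partial_{x_0}A_2-\partial_{x_1}A_3$ and $\partial_{x_0}A_3+\partial_{x_1}A_2$ appear with the correct signs. A secondary point is to ensure that \cite{DT} genuinely supplies a nonempty $\mathcal{M}_{\Sigma}$ rather than merely defining it.

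As an unconditional fallback guaranteeing nonemptiness independently of \cite{DT}, I would note the reducible solution $(A,u)=(0,0)$: the trivial flat connection on the trivial bundle $L$ has $F(A)=0$, hence $\hat{F}_A=0$, while $\mu(0)=\tfrac12\,\bar{0}\,i\,0=0$ and $\mathcal{D}_A 0=0$, so $(0,0)\in\mathcal{S}$ immediately. This settles the proposition at once, though the dimensional-reduction construction is preferable as it produces the irreducible solutions relevant to the symplectic and prequantization structures developed later.
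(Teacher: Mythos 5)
Your proposal is correct and follows essentially the same route as the paper: take a solution $(a,\tilde u)$ of the reduced equations (\ref{localeqn3}), (\ref{localeqn4}) supplied by \cite{DT}, extend it to $\Sigma\times\Sigma$ as a field independent of $(x_2,x_3)$ with $A_2=a_2$, $A_3=a_3$ playing the role of the Higgs components, and check that (\ref{localeqn1}) and (\ref{localeqn2}) collapse to the reduced system. Your explicit component check and the fallback reducible solution $(A,u)=(0,0)$ are harmless additions beyond the paper's ``it is easy to check.''
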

\begin{proof}
There exists  solution to the equations (\ref{localeqn3}),  (\ref{localeqn4}),  by  ~\cite{DT}.

We set $A(x_0, x_1, x_2, x_3) = a(x_0, x_1)$ and $u(x_0, x_1, x_2, x_3)=\tilde{u}(x_0, x_1)$.  It is easy to check that
$(A, u)$ satisfy $(\ref{eqn})$.
\end{proof}

Let ${\mathcal C}_{\Sigma} $ and ${\mathcal M}_{\Sigma}$ be the configuration space and moduli space of the  dimensionally reduced equations on $\Sigma$, ~\cite{DT}.

{\bf Quillen bundle on ${\mathcal M}$:}

Let $p = (x_2^0, x_3^0) \in \Sigma$ and $ q = (x_0^0, x_1^0) \in \Sigma$ be fixed.

Then we  define  the map 
$\Psi: {\mathcal M} \mapsto {\mathcal M}_{\Sigma} \times {\mathcal M}_{\Sigma} $ as follows.
$$[(A, u)] \mapsto \left( [(a_1, \tilde{u}_1)], [(a_2, \tilde{u}_2)]\right)$$ where   
$a_1 = A(x_0, x_1,  x_2^0, x_3^0)$,  $u_1 = u( x_0, x_1,  x_2^0, x_3^0)$ and $a_2= A(x^0_0, x^0_1,  x_2, x_3)$,  $u_2 = u( x^0_0, x^0_1,  x_2, x_3)$.

Let $\pi_1 : {\mathcal M}_{\Sigma} \times {\mathcal M}_{\Sigma} \mapsto {\mathcal M}_{\Sigma}$ be the projection to the first factor and $\pi_2$ be the projection to the second factor.

Let $\Psi_1 = \pi_1 \circ \Psi$ and $\Psi_2 = \pi_2 \circ \Psi$.

We showed in ~\cite{DT}  that there exists a Quillen bundle ${\mathcal Q}$ on ${\mathcal M}_{\Sigma}$ under certain integerality conditions.  Let ${\mathcal L}_1 = \Psi_1^*({\mathcal Q})$ and ${\mathcal L}_2 = \Psi_2^*({\mathcal Q})$.   

Let ${\mathcal L}  = {\mathcal L}_1 \otimes {\mathcal L}_2$.  This is a Quillen-type bundle on ${\mathcal M}$. 

{\bf Metrics and Symplectic forms:}

Recall that we have  fixed    $M = {\mathbb H}$ in ~\cite{DT}.  
Let us give a brief review of the symplectic form on ${\mathcal M}_{\Sigma}$, ~\cite{DT}.

There is a metric on  $\mathcal{C}_{\Sigma}$ as in ~\cite{DT},  defined by 
\begin{equation*}
g^{\mathcal{C}}(X,Y) = \frac{1}{2}\int_{\Sigma} \alpha_{1}\wedge \ast\alpha_{2} +  \frac{1}{2}\int_{\Sigma} g^{\scriptscriptstyle M}_{u}(\xi_{1}, \xi_{2})~\omega_{\Sigma} + \frac{1}{2}\int_{\Sigma} \eta_{1}\wedge \ast\eta_{2} 
\end{equation*}
where, $X=(\alpha_{1},\xi_{1},\eta_{1})$, $Y=(\alpha_{2},\xi_{2},\eta_{2}) \in T_{(A,u)}\mathcal{C}_{\Sigma}$. Here, the pull-back metric $ g^{M}_{u}: u^{\ast}T({\mathbb H}) \otimes u^{\ast}T ( {\mathbb H})  \mapsto \mathbb{R}$ is defined by
$g^{ M}_{u}(v,w) = g^{ M}_{u(p)} (v,w)$,  where $g^{M}$ is the usual hyperK$\ddot{\rm{a}}$hler metric on  $ M = {\mathbb H}$. In our case in fact,  the bundle $u^{\ast}T({\mathbb H}) = {\mathbb H}$,  since it is trivial.

This metric descends to ${\mathcal M}_{\Sigma}$, as was shown in ~\cite{DT},  as a K$\ddot{\rm{a}}$hler metric w.r.t. to the descendent of the complex structure  

\begin{equation*}
\mathcal{I}_{1} = \begin{pmatrix}   
                * & 0 & 0 \\                    
               0 & I & 0\\                      
               0 & 0 & -*                       
              \end{pmatrix}
              \end{equation*} on $\mathcal{C}^{\Sigma}$ where $*$ is the Hodge star  on forms on $\Sigma$ and $I$ is the first of the $3$ complex structures on ${\mathbb H}$ (which we have also denoted by $i$ before).

             $ \Omega_1 (\cdot, \cdot) = g^{C}(\mathcal{I}_1(\cdot), \cdot)$ on $\mathcal{C}_{\Sigma}$ descends to ${\mathcal M}_{\Sigma}$ and is a  K$\ddot{\rm{a}}$hler form  as was proved in ~\cite{DT} by an argument involving the moment map and infinite dimensional version of Marsden-Weinstein symplectic reduction. 

The descendant will also be denoted by $\Omega_1$ and if it is integral in ${\mathcal M}_{\Sigma}$,  the Quillen bundle ${\mathcal Q}$ descends to the moduli space ${\mathcal M}_{\Sigma}$ and   $\Omega_1$  is the curvature of the Quillen bundle ${\mathcal Q}$. 

{\bf Note:} We donot need the K$\ddot{\rm{a}}$hler structure of $\Omega_1$ in this article.

We assume $\Omega_1$ is integral in ${\mathcal M}_{\Sigma}$.
\begin{proposition}
The curvature of ${\mathcal L}$ is proportional to $ \Psi_1^*(\Omega_1) + \Psi_2^*(\Omega_1)$. 
\end{proposition}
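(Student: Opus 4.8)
The plan is to exploit the fact that \({\mathcal L}\) is, by construction, the tensor product of two pullbacks of the Quillen bundle \({\mathcal Q}\) under the maps \(\Psi_1\) and \(\Psi_2\), and that the curvature of \({\mathcal Q}\) is already known from \cite{DT} to equal (a constant multiple of) \(\Omega_1\). Curvature is a local differential-geometric invariant that is natural with respect to pullback and additive with respect to tensor products, so the statement should follow formally once these two functorial properties are invoked. The heart of the proof is therefore not a fresh computation but an assembly of standard facts about connections and curvatures on line bundles.

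\begin{proof}
First I would recall the pullback formula for curvature. If \(\pi \colon N \to N'\) is a smooth map and \(({\mathcal Q}, h, \nabla)\) is a Hermitian line bundle on \(N'\) with curvature two-form \(F_{\nabla}\), then the pullback bundle \(\pi^{*}{\mathcal Q}\) carries the pullback Hermitian metric and the pullback connection \(\pi^{*}\nabla\), whose curvature is \(F_{\pi^{*}\nabla} = \pi^{*}F_{\nabla}\). Applying this with \(\pi = \Psi_1\) and \(\pi = \Psi_2\), and using that the curvature of \({\mathcal Q}\) is proportional to \(\Omega_1\) (the result recalled above from \cite{DT}, valid under the integrality assumption on \(\Omega_1\)), I obtain that the curvature of \({\mathcal L}_1 = \Psi_1^{*}({\mathcal Q})\) is proportional to \(\Psi_1^{*}(\Omega_1)\) and the curvature of \({\mathcal L}_2 = \Psi_2^{*}({\mathcal Q})\) is proportional to \(\Psi_2^{*}(\Omega_1)\), with the same constant of proportionality in both cases.

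Next I would use the additivity of curvature under tensor product. If \({\mathcal L}_1\) and \({\mathcal L}_2\) are Hermitian line bundles with connections, then \({\mathcal L} = {\mathcal L}_1 \otimes {\mathcal L}_2\) carries the tensor-product connection, whose curvature is the sum \(F_{{\mathcal L}_1} + F_{{\mathcal L}_2}\). Combining this with the previous step yields that the curvature of \({\mathcal L}\) is proportional to \(\Psi_1^{*}(\Omega_1) + \Psi_2^{*}(\Omega_1)\), which is exactly the assertion.
\end{proof}

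The main obstacle is not in the formal curvature algebra but in ensuring that the maps \(\Psi_1\) and \(\Psi_2\) are genuinely well-defined and smooth on the moduli space \({\mathcal M}\), rather than merely on the configuration space \({\mathcal C}\). The definition of \(\Psi\) restricts a solution on \(\Sigma \times \Sigma\) to the two copies of \(\Sigma\) through the fixed points \(p\) and \(q\), and one must check that a change of gauge on \({\mathcal M}\) descends to compatible gauge changes on the two \({\mathcal M}_{\Sigma}\) factors, so that the pullbacks \(\Psi_1^{*}({\mathcal Q})\) and \(\Psi_2^{*}({\mathcal Q})\) are honest bundles on the quotient. Once the naturality of \(\Psi\) with respect to the gauge action is confirmed, the curvature identity is a direct consequence of functoriality, and independence of the construction from the choice of basepoints \(p, q\) can be noted separately if needed.
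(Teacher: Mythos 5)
Your proposal is correct and follows essentially the same route as the paper, which likewise obtains the curvature of ${\mathcal L}$ by pulling back the known curvature of ${\mathcal Q}$ along $\Psi_1$ and $\Psi_2$ and adding the two contributions for the tensor product; you simply spell out the functorial facts the paper leaves implicit. Your closing remark about checking that $\Psi_1$ and $\Psi_2$ are well defined on the quotient ${\mathcal M}$ is a reasonable point of care, though the paper takes this for granted.
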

\begin{proof}
If $\Omega_1$ is integral then ${\mathcal Q}$ is a well defined Quillen bundle on ${\mathcal M}_{\Sigma}$. 

Then we  have ${\mathcal L}  = \Psi_1^*({\mathcal Q}) \otimes \Psi_2^*({\mathcal Q})$ a line bundle on ${\mathcal L}$  and  its curvature is  proportional to $\Psi_1^*(\Omega_1) + \Psi_2^*(\Omega_1)$.
\end{proof}

Recall $\tilde{\omega}_{\Sigma} =  \frac{i}{2} f(z, \bar{z}) d z \wedge d \bar{z}$ be a K$\ddot{\rm{a}}$hler form on $\Sigma$ which is such that $ \tilde{\omega}_{\Sigma}  = 2 \omega_{\Sigma}$ where $\omega_{\Sigma}$ is the choice in ~\cite{DT}.

Let $\omega$ be the K$\ddot{\rm{a}}$hler form on $X =\Sigma \times \Sigma$ given by $ \omega = \frac{i}{2}(f(z_1, \bar{z}_1)d z_1 \wedge d \bar{z}_1 +  f (z_2, \bar{z}_2) d z_2 \wedge d \bar{z}_2) $ where $z_1 = x_0 + i x_1 \in \Sigma$, first factor in $X$  and $z_2 = x_2 + i x_3 \in \Sigma$ second factor in $X$.  Then we have  $\omega = \pi_1^*(\tilde{\omega}_{\Sigma}) + \pi_2^*(\tilde{\omega}_{\Sigma})$. 

Let the configaration space of fields on $X= \Sigma \times \Sigma$ be denoted by ${\mathcal C} = {\mathcal A} \times \Gamma(L, {\mathbb H})$.

Suppose the K$\ddot{\rm{a}}$hler form $\omega$ on $\Sigma \times \Sigma$  has Poincar$\acute{\rm{e}}$ dual  $ Y = \Sigma  \times \{p\} \cup \{q\} \times \Sigma$ where recall $p=(x_2^0, x_3^0)$ and $q= (x_0^0, x_1^0)$.

We rewrite $A = i (A_0 d x_0 + A_1 dx_1 + A_2 dx_2 + A_3 dx_3)$  as $A = \tilde{A} - \Phi$ where $\tilde{A} =  i (A_0 d x_0 + A_1 dx_1)$ and $\Phi =   -i(A_2 dx_2 + A_3 dx_3)$,  notation in keeping with notation in ~\cite{DT}.  

Let us take a curve on ${\mathcal C}$,  namely $(A(t),  u(t))$, $t \in (-\epsilon, \epsilon)$.  Let $A(t) = \tilde{A}(t) - \Phi(t)$.  Let $\alpha = \frac{d \tilde{A}(t)}{dt}|_{t=0}$ and $\eta = \frac{d \tilde{\Phi}(t)}{dt}|_{t=0}$ and $ \zeta = \frac{du(t)}{dt}|_{t=0}$. Then  $(\alpha - \eta, \zeta ) \in T_{A,u} {\mathcal C}$.  

In analogy with ~\cite{DK},  ~\cite{G},  let us  define on    ${\mathcal A} \times \Gamma(X, {\mathbb H})$  a $2$-form,  namely

$$ \Omega ( a, b) =  \frac{1}{4} \int_{X} (\alpha_1 -  \eta_1) \wedge (\alpha_2 - \eta_2) \wedge \omega +\frac{1}{8} \int_Xg^M_u(I \zeta_1,  \zeta_2)   \omega \wedge \omega   $$ where $a = (\alpha_1 - \eta_1,   \zeta_1) \in T_{A,u} {\mathcal C}$ and $b = (\alpha_2 -  \eta_2,  \zeta_2) \in T_{A,u} {\mathcal C}$ and $\pi_1$, $\pi_2$ are projections of $\Sigma \times \Sigma$ to its factors. 

One can check that the only terms which survive in $\Omega(a,b)$ are  given in the expression 

$$\Omega(a,b) =  \frac{1}{4} \int_{X} \alpha_1 \wedge \alpha_2 \wedge \omega + \frac{1}{4} \int_{X} \eta_1 \wedge \eta_2 \wedge \omega + \frac{1}{4}  \int_X   g^M_u( I \zeta_1,  \zeta_2)  (\pi_1^* \omega_{\Sigma} + \pi_2^* \omega_{\Sigma})  \wedge \omega. $$ 

For instance,  the term $\alpha_1 \wedge \eta_2 \wedge \omega =0$ due to repeating index in the wedge product.
This can be seen as follows.  Let $\alpha_1 = i (a_0 dx_0 + a_1 d x_1)$,  $\eta_2 = i (c_2 dx_2 + c_3 d x_3)$.
Since $\omega =  f(x_0, x_1) (d x_0 \wedge d x_1) + f(x_2, x_3) (dx_2 \wedge d x_3)$,  we can see that   $\alpha_1 \wedge \eta_2 \wedge \omega =0$. We have also used the fact that $\omega = \pi_1^*(\tilde{\omega}_{\Sigma}) + \pi_2^*(\tilde{\omega}_{\Sigma}) $ and $\tilde{\omega}_{\Sigma}  = 2 \omega_{\Sigma}$

Let $(a_1, b_1) = (a,b)|_{\Sigma \times \{p\}}$ and $(a_2, b_2) = (a,b)|_{\{q\} \times \Sigma}$. 
  
Then we have 

\begin{equation}\label{omega}
\begin{array}{rcl}
\Omega(a,b) &=&   \frac{1}{4} \int_{X} (\alpha_1 \wedge \alpha_2  + \eta_1 \wedge \eta_2) \wedge  \omega+ \frac{1}{4} \int_X  g^M_u(I \zeta_1,   \zeta_2)  (\pi_1^* \omega_{\Sigma} + \pi_2^* \omega_{\Sigma})  \wedge \omega  \\
 &=& \Omega_Y(a,b) + \delta J
 \end{array}
\end{equation}
where 
\begin{equation}\label{omegay}
\begin{array}{rcl}
\Omega_Y(a,b) &=&  \frac{1}{4} \int_{\Sigma \times \{p\} } (\alpha_1 \wedge \alpha_2 + \eta_1 \wedge \eta_2) + \frac{1}{4}  \int_{\Sigma  \times \{p\}}   g^M_u(I \zeta_1,  \zeta_2) \omega_{1 \Sigma}  \\&+&  \frac{1}{4}  \int_{\{q\} \times \Sigma } (\alpha_1 \wedge \alpha_2 + \eta_1 \wedge \eta_2)+  \frac{1}{4}\int_{\{q\} \times \Sigma  }   g^M _u(I \zeta_1,  \zeta_2) \omega_{2 \Sigma}  \\
&=& \Omega_1(a_1, b_1) + \Omega_1(a_2,  b_2)\\
&=& (\Psi_1^*(\Omega_1) + \Psi_2^*(\Omega_1) )(a,b)
\end{array}
\end{equation}
   where $\omega_{1 \Sigma } = \frac{i}{2}f(z_1, \bar{z}_1)dz_1 \wedge d \bar{z}_1$, $\omega_{2 \Sigma} = \frac{i}{2}f(z_2, \bar{z}_2) d z_{2}  \wedge d \bar{z}_2$, $\delta J$ is an exact form (by an argument similar to ~\cite{DK},  page 253).
Thus the cohomology class  $[\Omega]$ is the same as $  [ \Omega_Y ]$. 
\begin{proposition}
$\Omega$ is symplectic on ${\mathcal C}$.
\end{proposition}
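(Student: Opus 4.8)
The plan is to check the two properties that make $\Omega$ a symplectic form on the affine configuration space ${\mathcal C} = {\mathcal A}\times\Gamma(L,{\mathbb H})$: that it is closed and that it is (weakly) non-degenerate.

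First, closedness. Since $L$ is trivial, ${\mathcal C}$ is an affine space whose tangent space at every point is canonically the fixed vector space of triples $(\alpha,\eta,\zeta)$, the variations of $\tilde A$, of $\Phi$ and of $u$. For a $2$-form on an affine space, $d\Omega = 0$ is equivalent to the bilinear pairing $\Omega_{(A,u)}$ being independent of the base point $(A,u)$, so I would verify exactly this. The first two integrals in (\ref{omega}) involve only the tangent vectors and the fixed Kähler form $\omega$, hence are manifestly base-point independent. For the third integral the key point is that $M={\mathbb H}$ carries the flat hyperKähler metric and the constant complex structure $I=i$, so $g^M_u(I\zeta_1,\zeta_2)=g^M(I\zeta_1,\zeta_2)$ is the value of a single fixed bilinear form on ${\mathbb H}$, with no dependence on $u$. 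Thus $\Omega$ has constant coefficients and is closed.

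Second, non-degeneracy. I would use the block decomposition already displayed for $\Omega(a,b)$: the cross terms $\alpha_1\wedge\eta_2\wedge\omega$ and $\eta_1\wedge\alpha_2\wedge\omega$ vanish by the index argument in the text, so the $\alpha$-, $\eta$- and $\zeta$-variations decouple and it suffices to treat each diagonal pairing. Suppose $\Omega(a,b)=0$ for all $b$. Testing against $b=(\alpha_2,0,0)$ isolates the first term; writing $\alpha_1 = i(p_0\,dx_0+p_1\,dx_1)$ and $\alpha_2=i(q_0\,dx_0+q_1\,dx_1)$ and using that only the $f(x_2,x_3)\,dx_2\wedge dx_3$ part of $\omega$ survives, this reduces to an integral of the form $\int_X (p_0q_1-p_1q_0)\,f\,dx_0\wedge dx_1\wedge dx_2\wedge dx_3$ with $f>0$; choosing $q$ to isolate each component forces $\alpha_1=0$, and the same computation with the complementary part of $\omega$ gives $\eta_1=0$. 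For the $\zeta$-block, $(\pi_1^*\omega_\Sigma+\pi_2^*\omega_\Sigma)\wedge\omega$ is a positive multiple of the volume form of $X$, and by the relation $g^M(I\cdot,\cdot)=\omega_{\mathbb H}(\cdot,\cdot)$ the admissible test vector $\zeta_2=I\zeta_1$ turns the integrand into $g^M(I\zeta_1,I\zeta_1)=|\zeta_1|^2\ge 0$ (using that $I$ is a $g^M$-isometry); vanishing of the integral then forces $\zeta_1=0$. Together the three blocks give $a=0$.

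I expect the genuine care to lie not in the computation but in the infinite-dimensional bookkeeping. Each vanishing step is an application of the fundamental lemma of the calculus of variations, legitimate precisely because the test variations $\alpha_2,\eta_2,\zeta_2$ range over all smooth sections of the relevant trivial bundles over $X$; consequently one obtains only weak non-degeneracy, i.e. injectivity of $a\mapsto\Omega(a,\cdot)$. I would state this explicitly, noting that this is the sense of non-degeneracy used in the analogous constructions of \cite{DK} and \cite{G}, and that closedness rests essentially on the flatness of ${\mathbb H}$, which makes $\Omega$ translation-invariant on ${\mathcal C}$.
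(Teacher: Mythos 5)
Your proof is correct, and its non-degeneracy half is essentially the paper's argument in a slightly unpacked form: the paper condenses your block-by-block testing into the single test vector $b=(*\alpha_1-*\eta_1,\,I\zeta_1)$, which turns $\Omega(a,b)$ into a sum of terms of fixed sign (the $\alpha$- and $\eta$-blocks via $\alpha_1\wedge *\alpha_1$, the $\zeta$-block via $g^M_u(I\zeta_1,I\zeta_1)=|\zeta_1|^2$ against the positive top form $\tfrac12\,\omega\wedge\omega$), so the two routes rest on the same positivity mechanism and the same decoupling of cross terms. What you add, and what the paper omits entirely, is the closedness of $\Omega$: the paper's proof of this proposition consists only of the non-degeneracy computation, leaving $d\Omega=0$ implicit. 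Your observation that ${\mathcal C}$ is affine and that $\Omega$ has constant coefficients --- because the first two integrands depend only on the tangent vectors and the fixed form $\omega$, and because $g^M_u=g^{\mathbb H}$ with $I=i$ is the flat translation-invariant structure on the trivial bundle $u^*T{\mathbb H}={\mathbb H}$ --- is exactly the missing justification, and your remark that one only obtains weak non-degeneracy is the correct reading of what both arguments actually prove in this infinite-dimensional setting. So your write-up is a more complete version of the same proof rather than a different one.
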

\begin{proof}

Suppose $\Omega(a,b) = 0 $ for all $b \in T_{A,u} {\mathcal C}$.  Let $a = (\alpha_1 - \eta_1 , \zeta _1 )$.  Take $b = (*\alpha_1 - * \eta_1,  I \zeta_1) $ where $*$ is the Hodge star on forms on $\Sigma$ and $I$ is the first of the $3$ natural complex structure on ${\mathbb H}$.

\begin{eqnarray*}
0&=&\Omega(a,b) =  \frac{1}{4} \int_{X} (\alpha_1 \wedge *\alpha_1  + \eta_1 \wedge * \eta_1) \wedge  \omega\\
&& + \frac{1}{4} \int_X  g^M_u(I \zeta_1,   I \zeta_1)  (\pi_1^* \omega_{\Sigma} + \pi_2^* \omega_{\Sigma})  \wedge \omega  
\end{eqnarray*}

Each term in the sum is non-positive definite.  Thus we have $a =0$.
\end{proof}

\begin{proposition}
$\Omega$ descends as  a symplectic  form on ${\mathcal M}$. 
\end{proposition}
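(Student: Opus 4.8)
The plan is to realize $\mathcal{M}$ as an infinite-dimensional symplectic (Marsden--Weinstein) quotient of $(\mathcal{C},\Omega)$ by the gauge group $\mathcal{G}$, in exact parallel with the argument of \cite{DT} that descended $\Omega_1$ to $\mathcal{M}_\Sigma$. To show that $\Omega$ descends to a symplectic form on $\mathcal{M} = \mathcal{S}/\mathcal{G}$ it suffices to verify three things: (i) $\Omega$ is closed on $\mathcal{C}$; (ii) $\Omega$ is $\mathcal{G}$-invariant; and (iii) when restricted to $T\mathcal{S}$, the form $\Omega$ is horizontal, i.e. $\Omega(L_\xi,b)=0$ for every infinitesimal gauge vector $L_\xi$ and every $b\in T_{(A,u)}\mathcal{S}$. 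Granting (i)--(iii), $\Omega$ passes to a closed $2$-form on the quotient, and its non-degeneracy there is the standard conclusion of symplectic reduction, namely that on $T\mathcal{S}$ the null space of $\Omega$ is cut down to be precisely the tangent space of the gauge orbit.

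Closedness is immediate: since $M=\mathbb{H}$ is flat, the hyperK\"ahler metric $g^{M}$ is constant in $u$, so all coefficients of $\Omega$ in the affine coordinates $(\alpha-\eta,\zeta)$ on $\mathcal{C}$ are constant, whence $d\Omega=0$. For $\mathcal{G}$-invariance I would use that each fixed $g\in\mathcal{G}$ acts on $\mathcal{A}$ by the affine translation $A\mapsto A+g^{-1}dg$, whose differential is the identity on $T\mathcal{A}$, so the $\alpha_1\wedge\alpha_2$ and $\eta_1\wedge\eta_2$ terms are unchanged; on the matter field $u\mapsto g^{-1}u$, so $\zeta\mapsto g^{-1}\zeta$, and the term $g^M_u(I\zeta_1,\zeta_2)$ is preserved because the $U(1)$-action $h\mapsto e^{-ic}h$ is an isometry of $g^M$ that commutes with $I=i$ (left multiplication by $e^{-ic}$ and by $i$ commute on $\mathbb{H}$, as both lie in the commutative subfield $\mathbb{C}$).

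The heart of the matter is (iii), where I would make precise the moment-map picture already visible in the opening computation on $\mathbb{H}$. Take $\xi\in\operatorname{Lie}(\mathcal{G})=\Omega^0(X,i\mathbb{R})$, so that $L_\xi=(d\xi,-\xi u)\in T_{(A,u)}\mathcal{C}$. Inserting $L_\xi$ into $\Omega$ and integrating the connection terms by parts over the closed manifold $X$ converts the $(\alpha-\eta)\wedge d\xi\wedge\omega$ contributions into a pairing of $\xi$ against $\hat F_A$, using the identification $\tau$ together with $\omega=\pi_1^*\tilde\omega_\Sigma+\pi_2^*\tilde\omega_\Sigma$; meanwhile the matter term $g^M_u(I(-\xi u),\zeta_2)$ reproduces $\langle\delta(\mu\circ u),\xi\rangle$ by exactly the identity $\omega_{\mathbb H}(L_\zeta,\cdot)=-\langle\delta\mu,\zeta\rangle$ recorded at the start of the section. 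Assembling these gives $\Omega(L_\xi,b)=\langle\delta\Phi(b),\xi\rangle$ for the moment map $\Phi(A,u)=\hat F_A-\mu\circ u$, whose zero set is precisely the first equation of \eqref{eqn}. Since every $b\in T\mathcal{S}$ lies in $\ker\delta\Phi$ (the Dirac equation $\mathcal{D}_A u=0$ only restricts $T\mathcal{S}$ further and so cannot obstruct this), horizontality follows, and $\Omega$ descends.

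The main obstacle I anticipate is carrying out the moment-map verification honestly in the genuinely $4$-dimensional setting rather than on a Riemann surface: one must reconcile the bulk ``wedge with $\omega$'' pairing on $X$ with the self-dual projection $F_A^+$ entering $\hat F_A$, and then establish non-degeneracy of the descended form, which in the full reduction requires that the null space of $\Omega\vert_{T\mathcal{S}}$ be exactly the gauge directions. The clean way past both difficulties is the Poincar\'e-duality reduction already set up in \eqref{omega}--\eqref{omegay}: restricting to $Y=\Sigma\times\{p\}\cup\{q\}\times\Sigma$ replaces $\Omega$ on $\mathcal{S}$, up to the exact term $\delta J$, by $\Omega_Y=\Psi_1^*\Omega_1+\Psi_2^*\Omega_1$. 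Since $\Psi:\mathcal{M}\to\mathcal{M}_\Sigma\times\mathcal{M}_\Sigma$ is a well-defined map of moduli spaces and $\Omega_1$ is already known from \cite{DT} to descend to $\mathcal{M}_\Sigma$, the form $\Omega_Y$ descends to $\mathcal{M}$ tautologically; because $\delta J$ is exact and the two forms agree in cohomology, this identifies the descended $\Omega$ with the pulled-back symplectic form $\Psi_1^*\Omega_1+\Psi_2^*\Omega_1$, which also ties it to the curvature of $\mathcal{L}$ and completes the proof.
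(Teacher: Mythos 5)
Your proposal follows essentially the same route as the paper: exhibit $\chi(A,u) \propto \hat{F}_A - \mu\circ u$ as a moment map for the Hamiltonian gauge action on $({\mathcal C},\Omega)$ (the connection term via the Donaldson--Kronheimer wedge-with-$\omega$ argument, the matter term via the moment map on ${\mathbb H}$), note that the Dirac equation introduces no extra degeneracy, and invoke infinite-dimensional Marsden--Weinstein reduction. One caveat on your final paragraph: $\Omega$ and $\Omega_Y = \Psi_1^*(\Omega_1)+\Psi_2^*(\Omega_1)$ agree only in cohomology (they differ by the exact term $\delta J$), so the descended $\Omega$ is cohomologous to, not equal to, the pulled-back form, and non-degeneracy cannot be imported from $\Omega_Y$ in that way --- the paper instead asserts directly that the null directions of $\Omega$ on ${\mathcal S}$ are exactly the gauge orbits.
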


\begin{proof}
Let ${\mathcal P} \subset {\mathcal C}$ be the space of solutions to the equation ${\mathcal D}_A u = 0$.  Once can check that this equation does not introduce any degeneracy in the symplectic form $\Omega$ in ${\mathcal C}$.  Thus $({\mathcal P}, \Omega$) is symplectic. 
Let ${\mathcal S} \subset {\mathcal P}$ be the solution space of the entire set of equations (\ref{eqn}).  Let $\chi (A, u) =  \frac{1}{8} (\hat{F}_A - \mu \circ u)$.  Thus ${\mathcal S} = \chi^{-1}(0) \cap {\mathcal P}$.    

One can show that  $\Omega$ is degenerate on ${\mathcal S}$ but the leaves of degeneracy are the gauge orbits.  
 The action of the gauge group  on ${\mathcal C}$ is Hamiltonian the proof of which  follows exactly as in ~\cite{Dv1} or ~\cite{DT}.  

It can be shown that $\chi$ is a moment map  for the symplectic form $\Omega$.  We give a sketch of the proof.

Using an  argument   in  ~\cite{DK},  page 250,  one can show that the first term in $\chi$, namely $ \frac{1}{8} \hat{F}_A$ contributes  to the moment map for the  first term of the  form $\Omega$, namely $ \frac{1}{4} \int_{X} (\alpha_1 - \eta_1) \wedge (\alpha_2 - \eta_2) \wedge \omega$. 
This follows from the fact that $F(A) \wedge \omega  = \frac{1}{2} \hat{F}_A \wedge \omega \wedge \omega$ is the moment map for $ \tilde{\Omega}( \tilde{\alpha}_1 , \tilde{ \alpha}_2) =  \int_X  \tilde{\alpha}_1 \wedge \tilde{\alpha}_2 \wedge \omega $, ~\cite{DK} page 250-251,  where $\hat{F}_A = \tau(F^+_A)$ and $F^+_A$ is the self-dual part of the curvtaure $F(A)$ and $\tilde{\Omega}$ is defined by the formula above.

Next we note that $ g_u^M( I L_{\zeta},  \cdot)= \omega_u( L_{\zeta}, \cdot) =  -  < \delta \mu(u),  \zeta> (\cdot)$ where $\mu$ is the moment map on ${\mathbb H}$ and $g_u^M$ is the metric $g_{\mathbb H}$  on $M={\mathbb H}$ at $u \in {\mathbb H}$.   

The second term in $\chi$, namely $-\frac{1}{8} \mu \circ u$, contributes to 

$\frac{1}{8} \int_X g^M_u(I \zeta_1,  \zeta_2)   (\pi_1^*  \tilde{\omega}_{\Sigma} + \pi_2^* \tilde{\omega}_{\Sigma}) \wedge \omega  =\frac{1}{8} \int_X g^M_u(I \zeta_1,  \zeta_2)   \omega \wedge \omega  $,  the second term in the symplectic form $\Omega$.    Thus $\chi(A, u) = \frac{1}{8} (\hat{F}_A - \mu \circ u)$ is a moment map on $({\mathcal P}, \Omega)$. 

$ {\mathcal S} = \chi^{-1}(0) \cap {\mathcal P}$ is the solution space of the equations (\ref{eqn}) and  the symplectic form is degenerate only along gauge orbits in ${\mathcal S}$. 
By the infinite dimensional version of the  Marsden-Weinstein symplectic reduction,   we have $\Omega$ descends as a symplectic form on ${\mathcal M} = {\mathcal S}/ {\mathcal G}$.

\end{proof}

\begin{theorem}
${\mathcal M}$ can be geometrically prequantized.
\end{theorem}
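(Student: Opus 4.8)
The plan is to upgrade the line bundle $\mathcal{L}$ already built on $\mathcal{M}$ into a genuine prequantum bundle for $\Omega$. By the preceding Proposition, $\mathcal{L} = \Psi_1^*(\mathcal{Q}) \otimes \Psi_2^*(\mathcal{Q})$ comes equipped with the (pulled-back) Quillen metric $H$ and induced connection $\nabla$, and its curvature $F(\nabla)$ is proportional to $\Omega_Y = \Psi_1^*(\Omega_1) + \Psi_2^*(\Omega_1)$. Geometric prequantization, as defined in the introduction, asks instead for a line bundle with Hermitian metric and connection whose curvature is proportional to $\Omega$ itself. Since equation (\ref{omega}) exhibits $\Omega = \Omega_Y + \delta J$ with $\delta J$ exact, the entire task reduces to absorbing the exact correction $\delta J$ into the connection without disturbing the Hermitian structure.

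First I would record the integrality statement. Because $\Omega_1$ is assumed integral on $\mathcal{M}_{\Sigma}$, and pullback by the smooth maps $\Psi_1, \Psi_2$ carries integral classes to integral classes, the class $[\Omega_Y]$ is integral; by (\ref{omega})--(\ref{omegay}) we have $[\Omega] = [\Omega_Y]$, so $[\Omega]$ is integral as well. This is precisely the cohomological obstruction to prequantization, and it is now satisfied.

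Next I would write the exact form as $\delta J = d\beta$ for a globally defined $i\RR$-valued $1$-form $\beta$ on $\mathcal{M}$, using the exactness established just before (\ref{omega}) (the argument parallel to \cite{DK}, page 253). I then define the modified connection $\nabla' = \nabla + c\,\beta$, where $c$ is the same proportionality constant relating $F(\nabla)$ to $\Omega_Y$. Since $\beta$ takes values in the imaginary line, $\nabla'$ remains compatible with the Hermitian metric $H$, and its curvature is $F(\nabla') = F(\nabla) + c\,d\beta$, which by construction is proportional to $\Omega_Y + \delta J = \Omega$. Thus $(\mathcal{L}, H, \nabla')$ is a line bundle with Hermitian metric and compatible connection whose curvature is proportional to $\Omega$, which is exactly a geometric prequantization of $(\mathcal{M}, \Omega)$.

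The main obstacle I expect is the global exactness of $\delta J$ on the quotient $\mathcal{M}$: one must verify that the primitive $\beta$ is gauge invariant and descends from $\mathcal{C}$ to $\mathcal{M}$, so that it is a bona fide global $1$-form rather than a merely local one. A secondary point is to confirm that the connection shift preserves Hermiticity, which follows immediately once $\beta$ is seen to be $i\RR$-valued. With these two verifications in hand, the construction goes through and the theorem follows.
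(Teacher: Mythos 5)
Your proposal is correct and follows essentially the same route as the paper: both rest on the curvature computation for $\mathcal{L}$ from the preceding proposition and on the cohomology identity $[\Omega]=[\Omega_Y]$ coming from $\Omega = \Omega_Y + \delta J$, the only difference being that you realize the cohomologous curvature by shifting the connection by a primitive of $\delta J$ on the same bundle, while the paper phrases the identical step as replacing $\mathcal{L}$ by an isomorphic bundle $\mathcal{E}$ carrying the desired curvature. Your explicit attention to the integrality of $[\Omega]$ and to whether the primitive $\beta$ descends to $\mathcal{M}$ makes visible a verification the paper leaves implicit, but it does not change the argument.
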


\begin{proof}

The curvature of ${\mathcal L}$ is proportional to $\Psi_1^*(\Omega_1) + \Psi_2^*(\Omega_1)$.  It is easy to see by equations (\ref{omega}) and (\ref{omegay}) that this has the same cohomology class as $ \Omega$ which is symplectic.  Thus ${\mathcal  L}$ is isomorphic to a line bundle ${\mathcal E}$ whose curvature is proportional to the symplectic form   $ \Omega$. 
Thus we are in the framework of geometric quantization.

\end{proof}

\begin{remark} It would be interesting to see if the sections of ${\mathcal L}^{\otimes k}$ (for a $k$ large enough) allows us to give a proper embedding of  the moduli space ${\mathcal M}$ into (finite or infinite dimensional ) complex projective space.  It would be interesting to study the nature of coherent states in this context.  This is work in progress. 
\end{remark}

\section{Acknowledgement}
Rukmini Dey acknowledges support from the  project RTI4001,  Department of Atomic Energy,  Government of India and support from  grant CRG/2018/002835,  Science and Engineering Research Board,  Government of India.

\end{document}